\newcommand{\TC}[1]{\textcolor{cyan}{{#1}}}
\def\BState{\State\hskip-\ALG@thistlm}
\newcommand{\CASE}[1]{\STATE \textbf{case} #1\textbf{:} \begin{ALC@g}}
\newcommand{\ENDCASE}{\end{ALC@g}}
\newcommand{\DEFAULT}{\STATE \textbf{default:} \begin{ALC@g}}
\newcommand{\ENDDEFAULT}{\end{ALC@g}}
\newcommand{\DEFAULTLINE}[1]{\STATE \textbf{default:} }
\algnewcommand{\LineComment}[1]{\State \(\triangleright\) #1}
\def\BState{\State\hskip-\ALG@thistlm}
\tikzset{diamond state/.style={draw,diamond}}
\tikzset{diamond state/.style={draw,diamond}}
\newtheorem{theorem}{Theorem}
\newtheorem{remark}{Remark}
\newtheorem{assumption}{Assumption}
\newacronym{lfd}{LFD}{Local Fault Diagnoser}
\newcommand{\set}[1]{\mathcal{#1}} 
\newcommand{\ie}{\textit{i.e.,~}} 
\newcommand{\eg}{\textit{e.g.,~}} 
\newcommand{\inneighbor}[1]{\set{N}_{#1}^{\texttt{in}}}
\newcommand{\outneighbor}[1]{\set{N}_{#1}^{\texttt{out}}}
\newcommand{\outdegree}[1]{d_{#1}^{\texttt{out}}}
\definecolor{qual1}{HTML}{7FC97F} 
\definecolor{qual2}{HTML}{BEAED4} 
\definecolor{qual3}{HTML}{FD673A} 
\definecolor{qual4}{HTML}{386CB0} 
\definecolor{qual5}{HTML}{E31A1C} 
\definecolor{Green}{HTML}{1F783A}
\definecolor{DarkRed}{RGB}{145,12,7} 
\newcommand{\cmark}{\textcolor{Green}{\ding{51}}}
\newcommand{\xmark}{\textcolor{DarkRed!50}{\ding{55}}}
\DeclareMathSymbol{\shortminus}{\mathbin}{AMSa}{"39}
\newenvironment{list4}{
	\begin{list}{$\bullet$}{%
			\setlength{\itemsep}{0.05cm}
			\setlength{\labelsep}{0.2cm}
			\setlength{\labelwidth}{0.3cm}
			\setlength{\parsep}{0in} 
			\setlength{\parskip}{0in}
			\setlength{\topsep}{0in} 
			\setlength{\partopsep}{0in}
			\setlength{\leftmargin}{0.16in}}}
	{\end{list}}
\title{\Huge
Distributed Estimation and Control for LTI Systems \\under Finite-Time Agreement
}
\author{
Camilla Fioravanti$^{1}$, Evagoras Makridis$^2$, Gabriele Oliva$^{1*}$, Maria Vrakopoulou$^{2}$, and Themistoklis Charalambous$^{2}$
\thanks{
$^1$Department of Engineering, University Campus Bio-Medico of Rome, Via Alvaro del Portillo, 21 - 00128 Roma, Italy. E-mails: $\{$c.fioravanti, g.oliva$\}$@unicampus.it.}
\thanks{$^2$Department of Electrical and Computer Engineering, School of Engineering, University of Cyprus, 
2109 Aglantzia, Nicosia, Cyprus. E-mails: $\{$surname.name$\}$@ucy.ac.cy.}
\thanks{The project MINERVA has received funding from the European Research Council (ERC) under the European Union's Horizon 2022 research and innovation programme (Grant Agreement No. 101044629).}
\thanks{$^*$ corresponding author. }
}
\begin{document}

\maketitle
\thispagestyle{empty}
\pagestyle{empty}


\begin{abstract}
This paper considers a strongly connected network of agents, each capable of partially observing and controlling a discrete-time linear time-invariant (LTI) system that is jointly observable and controllable.
Additionally, agents collaborate to achieve a shared estimated state, computed as the average of their local state estimates. Recent studies suggest that increasing the number of average consensus steps between state estimation updates allows agents to choose from a wider range of state feedback controllers, thereby potentially enhancing control performance.
However, such approaches require that agents know the input matrices of all other nodes, and the selection of control gains is, in general, centralized. 
Motivated by the limitations of such approaches,  we propose a new technique where: (i) estimation and control gain design is fully distributed and finite-time, and (ii) agent coordination involves a finite-time exact average consensus algorithm, allowing arbitrary selection of estimation convergence rate despite the estimator's distributed nature. We verify our methodology's effectiveness using illustrative numerical simulations. 
\end{abstract}

\begin{keywords}
Distributed Estimation, Distributed Control, Distributed Gain Design, LTI Systems, Finite-time Agreement
\end{keywords}

\section{Introduction}
\IEEEPARstart{I}{n the last decades,} several works in the literature focused on solving the distributed estimation and control problem, mainly relying on the development of linear observers~\cite{rego2019distributed, Ugrinovskii:2013-conditions, zhu2014cooperative, park2012necessary,park2016design, mitra2018distributed,khajenejad2023distributed,liu2023distributed}.  
In~\cite{Ugrinovskii:2013-conditions}, the authors establish the necessary conditions for observability in the dependence of the plant model and the underlying communication graph, while in~\cite{zhu2014cooperative} the existence of a distributed observer is guaranteed under the detectability and connectivity conditions. 
In~\cite{park2012necessary,park2016design}, a state-augmented observer is proposed, and necessary and sufficient conditions for their stabilizability are derived. 
More recently, in~\cite{mitra2018distributed}, Kalman's canonical observational decomposition to a multi-sensor environment is introduced. 
\begin{table}[h!]
\centering
\caption{\textcolor{black}{Comparison of this work against the state of the art.}}
{\tiny
\noindent \renewcommand{\arraystretch}{1}
\renewcommand{\tabcolsep}{1pt}
\resizebox{0.49\textwidth}{!}{
\begin{tabular}{|l|c|c|c|c|}  
    \cline{1-5}
    \hline
    Approach & Discrete& Consensus & Selection of &  Selection of\\
     & Time& Consensus & Estimation Gains &  Control Gains \\
    \hline
    \hline
    \cite{kim2016distributed} - 2016
  & \xmark & \xmark & uncoordinated & n/a \\
    \hline
    \cite{park2016design} - 2017& \cmark & \xmark & centralized & n/a \\
    \hline
    \cite{han2018simple} - 2019 & \xmark & \xmark & uncoordinated & n/a \\
    \hline
    \cite{wang2019distributed} - 2019 & \xmark & \xmark & uncoordinated & n/a \\
    \hline
    \cite{wang2019distributedb} - 2019 & \cmark & asymptotic & uncoordinated & n/a \\
    \hline
    \cite{wang2020distributed} - 2020& \xmark & \xmark & \xmark & \xmark \\
    \hline
    \cite{savas2022separation} - 2022& \cmark & asymptotic & centralized & uncoordinated \\
    \hline
    {\textbf{This work}} & \cmark & \textbf{finite-time} & \textbf{coordinated} & \textbf{coordinated} \\
    \hline
\end{tabular}
}}
\label{tab:distributed_opt_algorithms}
\vspace{-15pt}
\end{table}

Although all the aforementioned works provided suitable solutions to the distributed estimation problem, none of them was able to directly control the convergence rate. This issue is overcome in~\cite{kim2016distributed,han2018simple, kim2019completely}, which developed the idea of the observability decomposition for the construction of a decomposed diffusive coupling. Specifically, in~\cite{han2018simple} the authors exploit linear matrix inequality (LMI) techniques to compute the observer gains, while in~\cite{kim2019completely} the distributed observer design is improved by eliminating the dependence on the communication topology among the observers.
In~\cite{wang2019distributedb}, the estimation error of discrete-time, jointly observable LTI system whose observations are distributed across a time-varying network, converges exponentially fast to zero at a fixed but arbitrarily chosen rate, assuming that the graph is strongly connected. Specifically, the agents are allowed to perform a given number of rounds of distributed agreement in between state updates.
The continuous-time counterpart was presented in~\cite{wang2019distributed}.
In \cite{wang2020distributed}, the authors proposed a distributed observer-based control system where each agent can independently adjust the controlled output to any desired set point value for a continuous-time system. However, although the solutions presented in \cite{han2018simple, kim2019completely, kim2016distributed, wang2019distributedb, wang2019distributed, wang2020distributed} succeed in the convergence rate assignment and can be implemented in a distributed manner, they all always require a degree of centralized coordination for the design of the estimation gains.
In Table \ref{tab:distributed_opt_algorithms}, we analyze the significant aspects of distributed gains design shown by the aforementioned works. 
Interestingly, only the approaches in \cite{wang2020distributed,savas2022separation} and in this paper also account for the control of the system, while all other approaches only focus on state estimation. Indeed, our approach is the only one that allows selecting the local controller gains.
In contrast to these approaches, in our paper, we aim to develop a strategy where agents' local estimation and control gains are designed and executed in a fully distributed manner.
\textcolor{black}{In particular}, we took inspiration from~\cite{savas2022separation}, in which a distributed estimation and control scheme for LTI systems is developed whereby agents are allowed to perform a given number of asymptotic average consensus iterations between successive time updates.
Additionally, the authors show how the frequency of information exchange between neighboring agents (consensus steps) affects the performance of the system since it affects the set of admissible controllers. Note that, also in this case, the control and estimation gains are known \emph{a priori} to the agents (\eg they are computed during a centralized design phase).

\color{black}
The main contributions of this work, which address the aforementioned limitations, are: (i) a novel fully distributed method for designing estimation and control gains through coordination, which unlike other approaches, enables self-configuration and resilience to node failures in multi-agent systems;
(ii) an exact finite-time average consensus mechanism (of precomputed required iterations) between estimation steps for precise eigenvalue placement, ensuring estimation error convergence to zero, offering stability guarantees which are not provided by other asymptotic methods where the consensus iterations may be insufficient.
These innovations set our approach apart, particularly in coordinated estimation and control gain selection, and in providing stability assurances. Our simulations demonstrate enhanced stability and performance in communication and computation time and rounds.
\color{black}

\section{Preliminaries}\label{sec:preliminaries}

\subsection{Notation and Graph Theory}
We denote vectors with boldface lowercase letters and matrices with uppercase letters. The transpose of matrix $A$ and vector $\mathbf{x}$ are denoted as $A^\top$, $x^\top$, respectively.  We refer to the \mbox{$(i,j)$-th} entry of a matrix $A$ by $A_{ij}$.
We represent by ${\bm 0}_n$ and ${\bm 1}_n$ vectors with $n$ {entries}, all equal to zero and to one, respectively.
We use  $\|\cdot\|$ to denote the Euclidean norm.

Let $\mathcal{G}=\{\mathcal{V},\mathcal{E}\}$ be a directed graph (digraph) with $N$ nodes \mbox{$\mathcal{V}=\{ v_1, v_2, \ldots, v_N \}$} and $e$ edges $\mathcal{E}\subseteq \mathcal{V}\times \mathcal{V}$, where $(v_i,v_j)\in \mathcal{E}$ captures the existence of a link from node $v_i$ to node $v_j$. 
A directed graph is {\em strongly connected} if each node can be reached by every other node via the edges, respecting their orientation.
Let the in-neighborhood $\mathcal{N}^{\texttt{in}}_i$ of a node $v_i\in\mathcal{V}$ be the set of nodes $v_j\in\mathcal{V}$ such that $(v_j,v_i)\in \mathcal{E}$; similarly, the out-neighborhood $\mathcal{N}^{\texttt{out}}_i$ of a node $v_i\in\mathcal{V}$ is the set of nodes $v_j\in\mathcal{V}$ such that $(v_i,v_j)\in \mathcal{E}$. In the case $\mathcal{N}^{\texttt{in}}_i$ is equal to $\mathcal{N}^{\texttt{out}}_i$ the graph is said to be \emph{undirected} and the neighborhood of a node $v_i\in\mathcal{V}$ is expressed with $\mathcal{N}_i$.
The {\em in-degree} $d_i^{\texttt{in}}$ of a node $v_i$ is the number of its incoming edges, \ie \mbox{$d_i^{\texttt{in}} = |\mathcal{N}_i^{\texttt{in}}|$};  similarly, the {\em out-degree} $d_i^{\texttt{out}}$ of a node $v_i$ is the number of its outgoing edges, \ie \mbox{$d_i^{\texttt{out}} = |\mathcal{N}_i^{\texttt{out}}|$}.

\subsection{Average (Minimum-Time) Consensus in Directed Graphs}
\label{sec:RatioConsensus}

The problem of average consensus over digraphs involves a number of nodes in a network represented by a graph $\set{G}$, that exchange information to compute the network-wide average of their initial values. 
We consider a synchronous setting in which each
node $v_j$ updates and sends its information to its neighbors at discrete times $t_{0}, t_{1}, t_{2}, \ldots$ (herein called \emph{consensus iterations}). We use ${\bm \alpha}_j[m] \in \mathbb{R}^n$ to denote the information state of node $v_j$ at time $t_m$. We index nodes' information states and any other information at time $t_m$ by $m$. At each consensus iteration $m$, each node $v_j$ maintains its information state ${\bm \alpha}_j[m] \in \mathbb{R}^n$ (initialized at a certain value, \eg a measurement, say ${\bm \alpha}_j[0]={\bm \theta}_j$), an auxiliary scalar variable $\pi_j[m] \in \mathbb{R}_+$ (initialized at $\pi_j[0]=1$), and ${\bm \mu}_j[m]:={\bm \alpha}_j[m]/\pi_j[m]$. \textcolor{black}{The authors in \cite{2010Christoforos:RC} showed that the nodes in an unbalanced digraph can asymptotically reach the exact average consensus by having each node $v_j \in \set{V}$ to iteratively communicate and update its values according to the so called \emph{ratio consensus} algorithm:}
\begin{subequations}\label{eq:ratio_consensus}
  \begin{align}
        {\bm \alpha}_j[m+1] &= p_{jj} {\bm \alpha}_j[m] + \sum_{v_i\in \inneighbor{j}} p_{ji} {\bm \alpha}_i[m], \label{eq:auxiliary_x} \\
        \pi_j[m+1] &= p_{jj} \pi_j[m] + \sum_{v_i\in \inneighbor{j}} p_{ji} \pi_i[m],\label{eq:auxiliary_y}
   \end{align}
\end{subequations}
\textcolor{black}{where $p_{lj}$ is a nonnegative weight assigned to outgoing information by each agent $v_j$ as $p_{lj}=1/(1+\outdegree{j})$ if $v_l \in \outneighbor{j} \cup \{v_j\}$, otherwise $p_{lj}=0$. The weights assigned based on this strategy form a column-stochastic matrix $P=\{p_{ji}\} \in \mathbb{R}_{+}^{N \times N}$ in which (possible) zero-valued entries, $p_{ji}=0$, denote the absence of a (directed) link $(v_j,v_i)$.}
%
Then, the solution to the average consensus problem can be asymptotically obtained. In particular, we have that~\cite{2010Christoforos:RC}
$$
\lim _{m \rightarrow \infty} {\bm \mu}_j[m]=\frac{\sum_{v_j \in \mathcal{V}} {\bm \alpha}_j[0]}{\sum_{v_j \in \mathcal{V}} \pi_j[0]}=\frac{\sum_{v_j \in \mathcal{V}} {\bm \theta}_j}{N},~ \forall v_j \in \mathcal{V}.
$$



The results in \cite{charalambous2015distributed} have shown that the exact average ${\bm \mu}$ can be distributively obtained in a finite and minimum number of steps in strongly connected digraphs. The results are based on the use of the concept of the minimal polynomial associated with the linear dynamics of each of \eqref{eq:auxiliary_x} and \eqref{eq:auxiliary_y}, in conjunction with the final value theorem, initially proposed for undirected graphs in \cite{2013:Ye}.

Since the update of each element $r$, $r\leq n$, of the information state of node $v_j$ at time instant $m$, denoted by ${\alpha}_{jr}[m] \in \mathbb{R}$, is independent of the update of all the other elements, we will explain how the minimum-time average consensus works for a scalar information state, for simplicity.
Consider the vectors of the differences between $2m+1$ successive discrete-time values at node $v_j$
$\overline{\alpha}^{\top}_{2m} = (\overline{\alpha}_{jr}[0], \overline{\alpha}_{jr}[1], \ldots, \overline{\alpha}_{jr}[2m] )$ and
$\overline{\pi}^{\top}_{2m} = (\overline{\pi}_j[0], \overline{\pi}_j[1], \ldots, \overline{\pi}_j[2m] )$,
where $\overline{\alpha}_{jr}[l] = {\alpha}_{jr}[l+1]-{\alpha}_{jr}[l]$, and $\overline{\pi}_{j}[l] = {\pi}_j[l+1]-{\pi}_j[l]$ for $l=0,\ldots,2m$. 
Then, each agent $v_j$ constructs the Hankel matrices $\Gamma\{\overline{\alpha}^{\top}_{2m}\}$ and $\Gamma\{\overline{\pi}^{\top}_{2m}\}$. The construction for $\Gamma\{\overline{\alpha}^{\top}_{2m}\}$ (and respectively for $\Gamma\{\overline{\pi}^{\top}_{2m}\}$) is as follows:
\begin{align*}
\Gamma\{\overline{\alpha}^{\top}_{2m}\} \triangleq
\begin{bmatrix}
\overline{\alpha}_{jr}[0] & \overline{\alpha}_{jr}[1] & \ldots & \overline{\alpha}_{jr}[m] \\
\overline{\alpha}_{jr}[1] & \overline{\alpha}_{jr}[2] & \ldots & \overline{\alpha}_{jr}[m+1] \\
\vdots & \vdots & \ddots & \vdots \\
\overline{\alpha}_{jr}[m] & \overline{\alpha}_{jr}[m+1] & \ldots & \overline{\alpha}_{jr}[2m]
\end{bmatrix}.
\end{align*}
It has been shown in \cite{2013:Ye} that after $2(M_j +1)$ the kernel of the Hankel matrices $\Gamma\{\overline{\alpha}^{\top}_{2m}\}$ and $\Gamma\{\overline{\pi}^{\top}_{2m}\}$ for arbitrary initial conditions ${\bm \alpha}[0]$ and ${\bm \pi}[0]$ except a set of initial conditions with Lebesgue measure zero
become defective. Once they become defective, node $v_j$ computes $M_j$ among others and has all the information needed to compute the final value of the overall system and, hence, compute the exact average in a finite number of steps. 
For more details, see~\cite{charalambous2015distributed}.

\subsection{Distributed Termination and Synchronization}
\label{sec:termination}

In~\cite{themis:2018ECC_termination} a distributed termination mechanism is proposed to enhance an existing finite-time distributed algorithm to allow the nodes to agree on when to terminate their iterations, provided they have all computed their exact average. Specifically, their proposed method is based on the fact that the finite-time consensus algorithm in \cite{charalambous2015distributed} allows nodes in the network running iterations \eqref{eq:auxiliary_x} and \eqref{eq:auxiliary_y} to compute an upper bound of their \emph{eccentricity\footnote{The eccentricity of a node $v_j \in \set{V}$ is the length of the maximum distance between $v_j$ and any other node $v_i \in \set{V}$.}} and use this information for deciding when to terminate the process. The procedure is as follows: 

\begin{list4}
\item Once iterations \eqref{eq:auxiliary_x} and \eqref{eq:auxiliary_y} are initiated, each node $v_j$ also initiates two counters $c_j$, $c_j[0]=0$, and $r_j$, $r_j[0]=0$. Counter $c_j$ increments by one at every time step, \ie $c_j[m+1]= c_j[m]+1$. \textcolor{black}{Counter $r_j$ updates are described next.} 
\item Alongside iterations \eqref{eq:auxiliary_x} and \eqref{eq:auxiliary_y} a $\max$-consensus algorithm~\cite{2008:Cortes} is initiated as well, given by
\begin{align}\label{eq:maxconsensus}
\phi_j[m+1] = \max_{v_i \in \inneighbor{j} \cup \{v_j\}}\big\{ \max\{\phi_i[m],c_i[m]\} \big\}, 
\end{align}
with $\phi_j[0]=0$. Every time step $m$ for which $\phi_j[m+1]={\phi}_j[m]$, $r_j$ increments by one, but if, however, at any step $m'$, $\phi_j[m'+1] \neq \phi_j[m']$, then $r_j$ is set to zero, \ie 
\begin{align}\label{eq:rj}
r_j[m+1]=
\begin{cases}
0, & \text{if } \phi_j[m+1] \neq \phi_j[m], \\
r_j[m]+1, & \text{otherwise}.
\end{cases}
\end{align}
\item Once the square Hankel matrices $\Gamma\{\overline{\alpha}^{\top}_{M_j}\}$ and $\Gamma\{\overline{\pi}^{\top}_{M_j}\}$ for node $v_j$ lose rank, node $v_j$ saves the count of the counter $c_j$ at that time step, denoted by $m^o_j$, as $c^o_j$, \ie $c^o_j\triangleq c_j[m^o_j]$, and it stops incrementing the counter, \ie $\forall m'\geq m^o_j, c[m']=c_j[m^o_j]=c^o_j$. Note that $c^o_j=2(M_j+1)$. 
\item Node $v_j$ terminates \eqref{eq:auxiliary_x} and \eqref{eq:auxiliary_y} when $r_j$ reaches $c^o_j$.
\end{list4}
The main idea of this approach is that $2(M_j +1)$ serves as an upper bound on the maximum distance of any other node to node $v_j$. This quantity is not known initially but becomes known to node $v_j$ through the finite-time consensus algorithm \cite{charalambous2015distributed}. Therefore, the algorithm in~\cite{themis:2018ECC_termination} is used for computing distributively the number of steps of the consensus iterations $\overline{m}$ needed by all nodes to complete the finite-time consensus; specifically, \textcolor{black}{we have}
\mbox{$\overline{m} = 2\max_{v_j\in \mathcal{N}} \{2(M_j+1)\} -1.$}
Hence, after $\overline{m}$ steps, each node can terminate the iterations.
\begin{remark}
\label{rem:diam}
Note that, as a byproduct of the distributed termination mechanism, an upper bound on the diameter of the network, $D^{\prime}=\max_{v_j\in \mathcal{N}} M_j$, can be acquired once a node has completed the execution of the termination mechanism. 
\end{remark}

\section{Problem Statement}\label{sec:problemstatement}

The problem considered in this paper is based on the setup used in~\cite{savas2022separation, wang2020distributed} in which a network of agents aim at estimating and controlling a discrete-time LTI system, characterized by 
\begin{equation}
\label{eq:sysdyn}
\begin{cases}
    {\bm x}[k+1]=A{\bm x}[k]+\sum_{j=1}^N B_j{\bm u}_j[k], ~ {\bm x}[0]\in \mathbb{R}^n , \\
    {\bm y}_i[k]=C_i{\bm x}[k],\quad \forall v_i\in \mathcal{V},
\end{cases}
\end{equation}
where $A\in\mathbb{R}^{n\times n}$, $B_j\in\mathbb{R}^{n\times q_j}$, $C_i\in\mathbb{R}^{p_i\times n}$, ${\bm x}[k]\in\mathbb{R}^n$ is the state vector, and ${\bm u}_j[k]\in\mathbb{R}^{q_j}, {\bm y}_i[k]\in\mathbb{R}^{p_i}$ are, respectively, the $j-$th control input and the $i-$th system output at discrete times $t_k$ (information states and any other information at time $t_k$ is indexed by $k$), herein called \emph{estimation-control iterations}.

Let us consider $q=\sum_{i=1}^N q_i$, $p=\sum_{i=1}^N p_i,$ and let us define 
$B=\begin{bmatrix}
    B_1, \dots, B_N
\end{bmatrix}\in\mathbb{R}^{n\times q}$
and 
\mbox{$C= \begin{bmatrix} C_1^\top, \dots, C_N^\top\end{bmatrix}^\top\in\mathbb{R}^{p \times n}.$}
In the remainder of the paper, we make the following assumption.
\begin{assumption}
\label{ass:1}
The system in Eq. \eqref{eq:sysdyn} is jointly controllable and jointly observable, \ie the pair $(A,B)$ is controllable and the pair $(A,C)$ is observable.
\end{assumption}
\noindent Assumption~\ref{ass:1} not only implies that each agent can only partially observe the LTI system output and only partially control the LTI system state, but that there exist controllers and observers such that the system is stable and the estimation error converges to zero. Hence, even though each agent might not be able to fully control or observe the system on its own, all the nodes can do so collaboratively by exchanging information over the strongly connected graph $\mathcal{G}$. 

The problem is to design the estimation dynamics and feedback
control law for each agent, given its own set of information,
so that jointly the agents stabilize the LTI system. Towards this end, Savas \emph{et al.}~\cite{savas2022separation} proposed a distributed state estimation scheme with the following structure:
\begin{align}
\label{eq:estimation}
    \hat{\bm x}_i[k+1]=&A\tilde{\bm x}_i[k]+L_i({\bm y}_i[k]-C_i\tilde{\bm x}_i[k])\\
    &+\sum_{j=1}^N B_jK_j\tilde{\bm x}_i[k]+\sum_{j\in\mathcal{N}_i} W_{ij}(\tilde{\bm x}_j[k]-\tilde{\bm x}_i[k]), \nonumber
\end{align}
where $\tilde{\bm x}_i[k]$ is the result of $m$-steps of consensus over the graph topology, using the state estimates $\hat{\bm x}_i[k]$ as the initial conditions; $L_i\in\mathbb{R}^{n\times p_i}$, $W_{ij}\in\mathbb{R}^{n\times n}$ are suitable gains that make the estimation error dynamics asymptotically stable, while the control input is chosen to be in the form
${\bm u}_i=K_i\tilde{\bm x}_i[k],$ with $K_i\in\mathbb{R}^{q_i\times n}$.
%
Note that in~\cite{savas2022separation} matrices $K_i$ are designed offline, in a centralized fashion, so that  $A+\sum_{i=1}^N B_i K_i$ is Schur stable.
Moreover, the authors consider a scenario where the agents are able to run a certain number $m$ of consensus iterations via a linear consensus algorithm for balanced graphs on their estimated states, between two consecutive estimation-control iterations. 
Specifically, in~\cite{savas2022separation} it is shown that, for sufficiently large $m$, there are gains $L_i, W_{ij}$ which stabilize the overall control and estimation process (\ie there is a separation between estimation and control).

In this paper, we assume that the agents execute, instead, a  {\em finite-time consensus} procedure (as described in~Sections~\ref{sec:RatioConsensus} and~\ref{sec:termination}), between consecutive estimation-control iterations. 
Under this assumption, we show that the distributed state estimation process is simplified, enabling the agents to arbitrarily place the eigenvalues for the estimation error dynamics, but also allowing for a larger set of controllers, if implemented under the framework of~\cite{savas2022separation}. 
Given the existence of this set of controllers, we develop an initialization procedure that allows the agents to coordinate in order to choose appropriate gains $K_i$ and $L_i$ in a completely distributed way and in finite time.

\section{Finite-time Average Consensus as an Enabler for Distributed Estimation and Control}\label{sec:finite-time}
In this section, we propose the adoption of the finite-time exact ratio consensus iterations discussed in Section~\ref{sec:RatioConsensus}, which operate in between two successive estimation-control time steps as shown in Fig.~\ref{fig:steps}. Under this scheme, the local state estimate $\hat{{\bm x}}_i[k]$ of each node $v_i$ is fed into the consensus protocol that exchanges this information with its neighboring nodes, as detailed in Sections~\ref{sec:RatioConsensus} and~\ref{sec:termination}. 
Under Assumption~\ref{assum:numberofsteps}, after of at most $\overline{m}$ consensus iterations, each node $v_i$ has terminated the finite-time consensus procedure, and computed the average of the network-wide state estimate in $M_i$ consensus iterations. Hence, each node $v_i$ at time step $k$ knows the common state estimate that is the average of the network-wide state estimate, $\overline{\bm x}[k]$, \ie 
$\overline{\bm x}[k]=\frac{1}{N}\sum_{l=1}^N \hat{\bm x}_l[k].$

\begin{assumption}
\label{assum:numberofsteps}
At least $\overline{m}$ consensus steps can be completed between two successive estimation-control time steps.
\end{assumption}
\begin{figure}[h]
    \centering
    \resizebox{0.99\columnwidth}{!}{
    \vspace{-20pt}
    \begin{tikzpicture}[>=stealth, thick,
   shorten >=3pt,
   shorten <=3pt,
   on grid,
   auto,
  ]

  \tikzset{
    ultra thin/.style= {line width=0.1pt},
    very thin/.style=  {line width=0.2pt},
    thin/.style=       {line width=0.4pt},
    semithick/.style=  {line width=0.6pt},
    thick/.style=      {line width=0.8pt},
    very thick/.style= {line width=1.6pt},
    ultra thick/.style={line width=2pt}
}
  
\tikzset{node distance = 3cm and 2cm}
\tikzset{invincible/.style={draw=none, thin, circle, fill=none, minimum size=0}}
\tikzset{main/.style={draw=black, thick, draw, circle, fill=gray!20, minimum size=0.6cm}}
\tikzset{label/.style={draw=none, circle, fill=none, minimum size=0.8cm}}
\tikzset{edge/.style ={draw=black, thick, ->,> = latex'}}


\node[invincible] (0) at (0,0) {}; 

\path[-] +(-0.1,0) edge[very thick] +(1.5,0);
\path[-] +(1.4,0) edge[very thick, densely dashed] +(2.1,0);
\path[-] +(2,0) edge[very thick] +(4,0);
\path[-] +(3.9,0) edge[very thick, densely dashed] +(4.6,0);
\path[-] +(4.5,0) edge[very thick] +(6.5,0);
\path[arrows={-Latex[width=9pt, length=9pt]}] +(6.4,0) edge[very thick,densely dashed] +(7.3,0);

\path[-] +(0,0.35) edge[ultra thick,DarkRed!70] +(0,-0.35);
\path[-] +(0.5,0.25) edge[very thick,blue!70] +(0.5,-0.25);
\path[-] +(1,0.25) edge[very thick,blue!70] +(1,-0.25);

\path[-] +(2.5,0.25) edge[very thick,blue!70] +(2.5,-0.25);
\path[-] +(3,0.35) edge[ultra thick,DarkRed!70] +(3,-0.35);
\path[-] +(3.5,0.25) edge[very thick,blue!70] +(3.5,-0.25);

\path[-] +(5,0.25) edge[very thick,blue!70] +(5,-0.25);
\path[-] +(5.5,0.35) edge[ultra thick,DarkRed!70] +(5.5,-0.35);
\path[-] +(6,0.25) edge[very thick,blue!70] +(6,-0.25);

\path[-] +(0,0.6) edge[densely dotted,thick,blue!50] +(0,0.2);
\path[-] +(1,0.6) edge[densely dotted,thick,blue!50] +(1,0.1);
\path[-] +(2.5,0.6) edge[densely dotted,thick,blue!50] +(2.5,0.1);
\path[-] +(5,0.6) edge[densely dotted,thick,blue!50] +(5,0.1);

\path[-] +(0,-0.6) edge[densely dotted,thick,DarkRed!50] +(0,-0.2);
\path[-] +(3,-0.6) edge[densely dotted,thick,DarkRed!50] +(3,-0.2);
\path[-] +(5.5,-0.6) edge[densely dotted,thick,DarkRed!50] +(5.5,-0.2);

\node[label] at (-0.05,0.7) {\scriptsize $m=0$};
\node[label] at (0.95,0.7) {\scriptsize $m=2$};
\node[label] at (2.5,0.7) {\scriptsize $m=\bar{m}$};
\node[label] at (5,0.7) {\scriptsize $m=\bar{m}$};

\node[label] at (2.5,1) {\scriptsize $\overline{\bm{x}}[0]$};

\node[label] at (5,1) {\scriptsize $\overline{\bm{x}}[1]$};

\node[label] at (0,-0.7) {\scriptsize $\hat{\bm{x}}_l[k\!=\!0]$};
\node[label] at (3,-0.7) {\scriptsize $\hat{\bm{x}}_l[k\!=\!1]$};
\node[label] at (5.5,-0.7) {\scriptsize $\hat{\bm{x}}_l[k\!=\!2]$};

\node[label] at (7.75,0.05) {time};

\end{tikzpicture}
    }
    \vspace{-10pt}
    \caption{Finite-time average consensus iterations $m$ (blue ticks) within distributed estimation and control iterations $k$ (red ticks).}
    \label{fig:steps}
\end{figure}
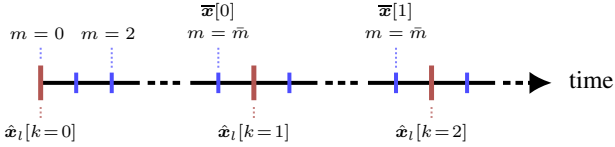

Notice that, due to such a procedure, the agents are able to reach an agreement on the estimate $\overline{\bm x}[k]$ of the state at time $k$; as a result, the systems' dynamics can be equivalently modeled as 
\begin{equation}
\label{eq:pippo}
    {\bm x}[k+1]=A{\bm x}[k]+\sum_{j=1}^N B_j K_j \overline{\bm x}[k]
\end{equation}
while since $\hat{\bm x}_i[k]-\hat{\bm x}_j[k]=\overline{\bm x}[k]-\overline{\bm x}[k]={\bm 0}_n$, the local estimator in Eq.~\eqref{eq:estimation} reduces to
\begin{align}\label{eq:estimationfinite}
    \hat{\bm x}_i[k+1]=&A\overline{\bm x}[k]+L_i\left({\bm y}_i[k]-C_i\overline{\bm x}[k]\right)+\sum_{j=1}^N B_jK_j\overline{\bm x}[k].\nonumber
\end{align}

We define the estimation error for the $i$-th agent as
\mbox{${\bm e}_i[k]={\bm x}[k]-\hat{\bm x}_i[k],$}
and the average estimation error
$$
    \overline{\bm e}[k]=\frac{1}{N}\sum_{l=1}^N{\bm e}_i[k]=\frac{1}{N}\sum_{l=1}^N ({\bm x}[k]-\hat{\bm x}_l[k])={\bm x}[k]-\overline{\bm x}[k].
$$

An instance of our proposed distributed estimation and control scheme \textcolor{black}{(which extends the approach in~\cite{savas2022separation})} for the problem in Section III, considering a simple network, is shown in Fig.~\ref{fig:closed_loop}.
In particular, the operations undertaken by each node $v_i$ in-between state updates essentially amount to an estimation task (\textcolor{black}{represented by the symbol} $\texttt{E}_i$ in the figure), a finite-time agreement task (\textcolor{black}{represented by the symbol} $\texttt{A}_i$) and a control task (\textcolor{black}{represented by the symbol} $\texttt{C}_i$). The figure shows the relations among such tasks for each node.
Interestingly, as discussed next, the nodes, besides being coupled by the physical process, directly interact by exchanging information only within the finite-time agreement task, while the estimation and control tasks are performed locally, based on the agreed-upon estimate computed during the finite-time agreement procedures.
\color{black}
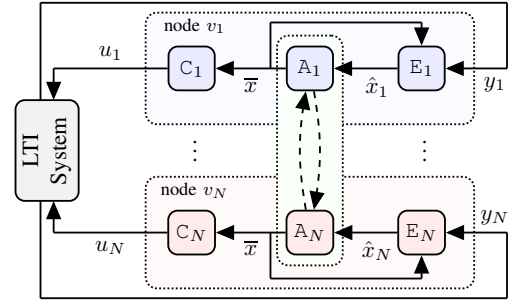
\begin{figure}[ht!]
    \vspace{-10pt}
    \centering
    \resizebox{0.8\columnwidth}{!}{
    \begin{tikzpicture}[auto, thick, node distance=1cm, >=triangle 45]

\tikzset{%
  block/.style    = {draw, rectangle,rounded corners=5, densely dotted, minimum height = 2.3em, minimum width = 3em},
  component/.style = {draw,rounded corners, thick, rectangle, minimum height = 1.9em, minimum width = 2em},
  sum/.style      = {draw, circle, node distance = 2cm}, 
}
    \draw
    node at (0,-1.75) [component,rotate=90,minimum height=0.8cm,text width=1.3cm,align=center,fill=black!5] (system) {LTI System} 
    node at (4,-0.5) [block,fill=blue!2,minimum width=5cm,minimum height=1.7cm] (v1) {}
    node at (4,-3) [block,fill=red!2,minimum width=5cm,minimum height=1.7cm] (vN) {}
    node at (4,-1.75) [block,fill=green!2,minimum width=1cm,minimum height=3.5cm] (consensus) {}

    node at (2.2,-3) [component,fill=red!8] (CN) {$\texttt{C}_N$}
    node at (4,-3) [component,fill=red!8] (AN) {$\texttt{A}_N$}
    node at (5.7,-3) [component,fill=red!8] (EN) {$\texttt{E}_N$};
    \draw[->] (EN) -> node[xshift=5, yshift=0.2] {$\hat{x}_N$} (AN);
    \draw[->] (AN) -> node[xshift=0, yshift=0.2] {$\overline{x}$} (CN);
    \draw[-] (CN) -- node[below] {$u_N$} (0.1,-3);
    \draw[->] (0.1,-3) -- node[] {} (0.1,-2.52);
    \draw[-] (3.4,-3) -- node[] {} (3.4,-3.7);
    \draw[-] (3.4,-3.7) -- node[] {} (5.7,-3.7);
    \draw[<-] (EN) -- node[] {} (5.7,-3.7);

    \draw
    node at (2.2,-0.5) [component,fill=blue!8] (C1) {$\texttt{C}_1$}
    node at (4,-0.5) [component,fill=blue!8] (A1) {$\texttt{A}_1$}
    node at (5.7,-0.5) [component,fill=blue!8] (E1) {$\texttt{E}_1$};
    \draw[->] (E1) -> node[xshift=5, yshift=-0.1] {$\hat{x}_1$} (A1);
    \draw[->] (A1) -> node[xshift=0, yshift=-0.1] {$\overline{x}$} (C1);
    \draw[-] (C1) -- node[above] {$u_1$} (0.1,-0.5);
    \draw[->] (0.1,-0.5) -- node[] {} (0.1,-0.98);
    \draw[-] (3.4,-0.5) -- node[] {} (3.4,0.2);
    \draw[-] (3.4,0.2) -- node[] {} (5.7,0.2);
    \draw[<-] (E1) -- node[] {} (5.7,0.2);

    \draw[-] (-0.1,0.5) -- node[] {} (7,0.5);
    \draw[-] (7,-0.5) -- node[] {} (7,0.5);
    \draw[->] (7,-0.5) -- node[xshift=8, yshift=0.2] {$y_1$} (E1);
	\draw[-] (-0.1,-0.99) -- node[] {} (-0.1,0.5);

    \draw[-] (-0.1,-4) -- node[] {} (7,-4);
    \draw[-] (7,-3) -- node[] {} (7,-4);
    \draw[->] (7,-3) -- node[xshift=8, yshift=13.5] {$y_N$} (EN);
	\draw[-] (-0.1,-2.51) -- node[] {} (-0.1,-4);

    \draw[->] (A1) edge[bend left=10,dashed] node[] {} (AN);
    \draw[->] (AN) edge[bend left=10,dashed] node[] {} (A1);

    \draw
    node[] at (2.25,-1.65) {$\vdots$};
    \draw
    node[] at (5.75,-1.65) {$\vdots$};
    
    \draw
    node[] at (2.25,0.15) {\footnotesize node $v_1$};
    \draw
    node[] at (2.25,-2.35) {\footnotesize node $v_N$};
\end{tikzpicture} 
    }
    \caption{\textcolor{black}{Proposed distributed estimation and control scheme.}}
    \label{fig:closed_loop}
\end{figure}

In the next theorem, we show that finite-time consensus \textcolor{black}{in-between estimation steps}, allows arbitrary eigenvalue selection for the error dynamics\footnote{\textcolor{black}{Even if the agents execute finite-time average consensus in-between estimation steps, the overall estimation process converges asymptotically.}}.
\begin{theorem}
Let Assumption~\ref{ass:1} hold. Then, the local estimation errors, ${\bm e}_i[k]$, can be brought to zero asymptotically, by selecting suitable gain matrices $L_i$ that assign arbitrary eigenvalues to 
$A-\frac{1}{N}\sum_{l=1}^N L_lC_l$.
\end{theorem}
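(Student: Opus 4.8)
The plan is to derive a closed form for the local error dynamics, aggregate them into an average-error recursion whose system matrix is exactly the one named in the statement, and then invoke observer pole placement.

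First I would subtract the reduced estimator \eqref{eq:estimationfinite} from the closed-loop plant \eqref{eq:pippo}. Since both expressions contain the identical term $\sum_{j=1}^N B_j K_j \overline{\bm x}[k]$, these cancel; substituting ${\bm y}_i[k]=C_i{\bm x}[k]$ and collecting the remaining terms around $({\bm x}[k]-\overline{\bm x}[k])$ yields
\begin{equation*}
{\bm e}_i[k+1]=(A-L_iC_i)\,\overline{\bm e}[k],
\end{equation*}
where I have used $\overline{\bm e}[k]={\bm x}[k]-\overline{\bm x}[k]$. The essential structural feature here, inherited from the \emph{exact} finite-time agreement, is that the right-hand side depends only on the common quantity $\overline{\bm e}[k]$ and not on ${\bm e}_i[k]$ itself; the local errors are thus slaved to the average error. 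Note also that the control gains $K_j$ disappear, so the estimation-error behaviour is decoupled from the control design.

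Next I would average the above over $i$. Because each local recursion shares the same driving term $\overline{\bm e}[k]$, the average error obeys the autonomous linear recursion
\begin{equation*}
\overline{\bm e}[k+1]=\Big(A-\tfrac{1}{N}\textstyle\sum_{l=1}^N L_lC_l\Big)\overline{\bm e}[k].
\end{equation*}
The crux is then to observe that $\sum_{l=1}^N L_lC_l = LC$ with $L\triangleq[L_1,\dots,L_N]\in\mathbb{R}^{n\times p}$ and $C=[C_1^\top,\dots,C_N^\top]^\top$, so the system matrix is $A-\tfrac{1}{N}LC = A-\bar L C$ with $\bar L\triangleq \tfrac{1}{N}L$. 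By Assumption~\ref{ass:1} the pair $(A,C)$ is observable, hence the dual of the pole-placement theorem guarantees a gain $\bar L$ placing the eigenvalues of $A-\bar L C$ at any prescribed self-conjugate set of locations; choosing them inside the unit disk renders the matrix Schur. Recovering the local gains is then immediate: set $L=N\bar L$ and partition its columns into blocks of widths $p_1,\dots,p_N$ to obtain $L_1,\dots,L_N$.

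Finally, with the eigenvalues so placed, $\overline{\bm e}[k]\to{\bm 0}_n$ geometrically; feeding this back into ${\bm e}_i[k+1]=(A-L_iC_i)\overline{\bm e}[k]$ and using that each $(A-L_iC_i)$ is a fixed bounded matrix shows ${\bm e}_i[k]\to{\bm 0}_n$ for every $v_i$. I do not expect a genuine obstacle in the calculation itself; the one point needing care is conceptual, namely that eigenvalues are assigned to the \emph{aggregate} $A-\tfrac{1}{N}LC$, so individual observability of $(A,C_i)$ is neither available nor needed. Joint observability is precisely what closes the argument, and the clean collapse to a single average-error recursion is exactly what the exact (rather than asymptotic) consensus buys us.
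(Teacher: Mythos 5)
Your proposal is correct and follows essentially the same route as the paper: both derive $\overline{\bm e}[k+1]=\bigl(A-\tfrac{1}{N}\sum_{l=1}^N L_lC_l\bigr)\overline{\bm e}[k]$ using the cancellation afforded by exact finite-time agreement, invoke joint observability of $(A,C)$ (equivalently of $(A,C/N)$) for arbitrary pole placement of the stacked gain, and conclude via ${\bm e}_i[k+1]=(A-L_iC_i)\overline{\bm e}[k]$ that the local errors are slaved to the vanishing average error. The only difference is cosmetic ordering---you derive the local recursion first and average it, while the paper computes the average-error dynamics directly---and your explicit partitioning of $L=N\bar L$ into the blocks $L_1,\dots,L_N$ merely makes precise a step the paper leaves implicit.
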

\begin{proof}
In order to prove the result, we observe that the dynamics of the network-wide average estimation error are described by
$$
    \begin{aligned}
    &\overline{\bm e}[k+1]\!=\!\frac{1}{N}\!\sum_{l=1}^N\!({\bm x}[k+1]\!-\!\hat{\bm x}_l[k+1])\!=\!{\bm x}[k+1]\!-\!\overline{\bm x}[k+1]\!\!\\
    &=\frac{1}{N}\sum_{l=1}^N
       \left(A{\bm x}[k]-A\overline{\bm x}[k]-L_l({\bm y}_l[k]-C_l\overline{\bm x}[k])\right)\\
    &=\frac{1}{N}\sum_{l=1}^N(A-L_lC_l)(\underbrace{{\bm x}[k]-\overline{\bm x}[k]}_{\overline{\bm e}[k]})=\left(A- \frac{1}{N} \mathcal{L}\,C\right)\overline{\bm e}[k],
    \end{aligned}
$$
where 
\mbox{$
\mathcal{L}=\begin{bmatrix}
    L_1, \dots, L_N
\end{bmatrix},
$}
and, thus, 
$\mathcal{L}\,C=\sum_{l=1}^NL_lC_l.$
Since, by Assumption \ref{ass:1}, the pair $(A,C)$ is observable, we have that also $(A,C/N)$ is observable and thus, by definition, there is a gain $\mathcal{L}$ which brings the error $\overline{\bm e}$ to zero by assigning arbitrary eigenvalues.

Let us now consider the dynamics of the local estimation errors ${\bm e}_i[k]$.
In particular, we have that
$$
\begin{aligned}
    {\bm e}_i[k+1] &= {\bm x}[k+1]-\hat{\bm x}_i[k+1]\\
    &= A{\bm x}[k]-A\overline{\bm x}[k]-L_l({\bm y}_l[k]-C_l\overline{\bm x}[k])\\
    &= \left(A-L_iC_i\right) \left({\bm x}[k]-\overline{\bm x}[k]\right)=\left(A-L_iC_i\right)\overline{\bm e}[k].
    \end{aligned}
$$
Therefore, being the local estimation errors ${\bm e}_i[k+1]$ essentially a scaling of $\overline{\bm e}[k]$ (for which we already proved convergence to zero) by a constant matrix $A-L_iC_i$, we have that also the terms ${\bm e}_i[k]$ converge to zero. This completes our proof.
\end{proof}
\begin{remark}
In~\cite{kim2016distributed,han2018simple,wang2019distributed,wang2019distributedb} the gains $L_i$ are designed locally. 
However, due to the presence of multiple input signals, this strategy is not guaranteed to work for the design of the control gains $K_i$.
Consider, for instance, the case where $N=2$,
$$
A=\begin{small}
    \begin{bmatrix}
1&-2&0&0\\
0&-1&0&0\\
0.1&-0.1&0.5&0.1\\
0.2&-0.1&0.5&0.1
\end{bmatrix}
\end{small},
\quad B_1=\begin{small}\begin{bmatrix}
0\\1\\0\\1
\end{bmatrix}\end{small},\!\!
\quad
B_2=\begin{small}\begin{bmatrix}
1\\0\\1\\0
\end{bmatrix}\end{small}\!\!.\!\!
$$
Using the approach in~\cite{kim2016distributed,han2018simple,wang2019distributed,wang2019distributed} to choose $K_i$ we get 
\begin{align}
K_1&=\begin{bmatrix}
2.7788 & -2.0033 & 0.0436 & 1.5033
\end{bmatrix},\nonumber\\ 
K_2&=\begin{bmatrix}
-1.7909 & 4.0311 & -0.1091 & -5.0182
\end{bmatrix}.\nonumber
\end{align}
However, the eigenvalues of $A+B_1K_1+B_2K_2$ are unstable, being \mbox{$\{0.6515 \pm 2.8137\imath,-2.7102,0.5073
\}$}.
Notice that a similar problem would arise if the gains $L_i$ in our scheme were computed as in \cite{kim2016distributed,han2018simple,wang2019distributed,wang2019distributedb}, as $A-\frac{1}{N}\sum_{i=1}^\ell L_\ell C_\ell$ would not necessarily be stable.
\end{remark}

\section{Distributed selection of the Gains}\label{sec:distributed_gain_computation}
In this section, we develop an initialization procedure that allows the agents to select suitable gains matrices $K_i$ and $L_i$ in finite time and in a completely distributed way.
Notice that, for the sake of brevity, in the remainder of this section we address the problem of choosing the control gains $K_i$; the application of the envisaged procedure to the selection of the observer gains $L_i$ is straightforward.

Specifically, the overall initialization procedure is comprised of three  consecutive initialization steps, described next:
\begin{itemize}
    \item[P1)] Run iterations of the finite-time average consensus algorithm twice, as in Sections~\ref{sec:RatioConsensus} and~\ref{sec:termination}, to obtain an upper bound $D'$ on the network diameter $D$ (see Remark~\ref{rem:diam}) and $\overline{m}$ \cite{themis:2018ECC_termination}. 
    \item[P2)] Perform a distributed leader election procedure, \eg a max-consensus procedure for $D'$ steps (computed already in procedure P1) over, \eg the agents' identifiers or by just selecting a number from a continuous fixed range of values~\cite{2008:Cortes,nejad2009max}.
    \item[P3)] Execute the procedure for the distributed selection of the gains discussed next, in Section~\ref{subsec:gainchoice}; it is based on the methodology discussed in Section~\ref{subsec:single}, which requires a leader, already elected in procedure P2.
\end{itemize}

\subsection{Single Eigenvalue Placement}
\label{subsec:single}
This subsection reviews a control technique for a system in the form ${\bm z}[k+1]=\mathcal{A}{\bm z}[k]+\mathcal{B}{\bm u}[k]$, with $\mathcal{A}\in\mathbb{R}^{n\times n}$ and $\mathcal{B}\in \mathbb{R}^{n\times q}$ that allows a single eigenvalue $\lambda_i$ of the dynamical matrix $\mathcal{A}$ to be modified while leaving all other eigenvalues unchanged.
In particular, in the following we assume that $\lambda_i$ is a controllable eigenvalue of $\mathcal{A}$, \ie 
\mbox{$
\texttt{rank}(\begin{bmatrix}\mathcal{A}-\lambda_i I_n & \mathcal{B}\end{bmatrix})=n
$}
as per the Popov-Belevitch-Hautus test (\eg see~\cite{luenberger1979introduction}). Moreover, let us consider a left eigenvector ${\bm w}_i(\mathcal{A})\in\mathbb{R}^n$ associated to $\lambda_i$, \ie  
\mbox{$
{\bm w}_i^\top(\mathcal{A})\mathcal{A}=\lambda_i {\bm w}_i^\top(\mathcal{A}).
$}
Let us now consider the dynamics \mbox{${\bm x}[k+1]=\mathcal{A}{\bm x}[k]+\mathcal{B}{\bm u}[k]$} and let us initially assume, for simplicity of exposition, that $\mathcal{B}\in \mathbb{R}^{n\times 1}$, \ie that ${\bm u}[k]$ is scalar.
Premultiplying the systems' dynamics  by ${\bm w}_i^\top(\mathcal{A})$, we highlight the dynamics along the eigenspace spanned by ${\bm w}_i^\top(\mathcal{A})$ (\eg see~\cite{luenberger1979introduction}), thus obtaining
$$
\begin{aligned}
{\bm w}_i^\top(\mathcal{A}) {\bm z}[k+1]&={\bm w}_i^\top(\mathcal{A})\mathcal{A}{\bm z}[k]+{\bm w}_i^\top(\mathcal{A})\mathcal{B} {u}[k]\\
& = \lambda_i{\bm w}_i^\top(\mathcal{A}){\bm z}[k]+{\bm w}_i^\top(\mathcal{A})\mathcal{B} {u}[k].
\end{aligned}
$$
Suppose that, as a result of the control action, we want to replace $\lambda_i$ with $\overline{\lambda}_i$, \ie we want to choose $u[k]$ such that 
$$
{\bm w}_i^\top(\mathcal{A}) {\bm z}[k+1]=\overline{\lambda}_i{\bm w}_i^\top(\mathcal{A}){\bm z}[k].
$$
This is possible by choosing $u[k]=\mathcal{K}{\bm z}[k]$ with
\begin{equation}
\label{eq:choicegainK}
  \mathcal{K}=\dfrac{\overline{\lambda}_i-\lambda_i}{{\bm w}_i^\top(\mathcal{A})\mathcal{B}} {\bm w}_i^\top(\mathcal{A}).  
\end{equation}
Notice that the above selection for $\mathcal{K}$ is well-posed; in fact, since $\lambda_i$ is controllable then $\begin{bmatrix}\mathcal{A}-\lambda_i I_n & \mathcal{B}\end{bmatrix}$ is full rank and thus  
\mbox{${\bm w}_i^\top(\mathcal{A})\begin{bmatrix}\mathcal{A}-\lambda_i I_n & \!\!\mathcal{B}\end{bmatrix}\neq 0,$}
from which we have that
$$
\begin{bmatrix}{\bm w}_i^\top(\mathcal{A})\mathcal{A}-\lambda_i {\bm w}_i^\top(\mathcal{A}) & \!\!{\bm w}_i^\top(\mathcal{A})\mathcal{B}\end{bmatrix}=\begin{bmatrix}0 &\!\! {\bm w}_i^\top(\mathcal{A})\mathcal{B}\end{bmatrix}\neq0,
$$
which implies that ${\bm w}_i^\top(\mathcal{A})\mathcal{B}\neq 0$.
The above $\mathcal{K}$ is such that $\mathcal{A}+\mathcal{B}\mathcal{K}$ has the same eigenvalues as $\mathcal{A}$, except for the fact that the eigenvalue $\lambda_i(\mathcal{A})$ is replaced by the desired $\overline{\lambda}_i$. Note that this is similar to a Hotelling deflation~\cite{mackey2008deflation}, since we only modify the direction spanned by one eigenvector, without modifying the other directions.

Notably, the procedure can be iterated. Assuming that we also want to replace $\lambda_j$ with $\overline{\lambda}_j$, it is sufficient to set
\mbox{$u[k]=\mathcal{K}{\bm z}[k]+r[k]$}, where $\mathcal{K}$ is as above, while $r[k]=\mathcal{H} {\bm z}[k]$, with
\begin{equation}
\label{eq:choicegainK2}
\mathcal{H}= \dfrac{\overline{\lambda}_j-\lambda_j}{{\bm w}_i^\top(\mathcal{A}+\mathcal{B}\mathcal{K})\mathcal{B}} {\bm w}_i^\top(\mathcal{A}+\mathcal{B}\mathcal{K}).
\end{equation}
Notice that the left eigenvector used to replace $\lambda_j$ is the one obtained after applying $K$, \ie considering the matrix $\mathcal{A}+\mathcal{B}K$.

Let us now consider the case where $\mathcal{K}\in\mathbb{R}^{n\times q}$ with $q>1$. In this case, assuming that a matrix $\mathcal{K}^{(j)}$ has been selected with the above approach for each column $\mathcal{K}^{(j)}$ of $B$, it is sufficient to set the overall $\mathcal{K}\in\mathbb{R}^{q\times n}$ as
\mbox{$
\mathcal{K}=\begin{bmatrix}\mathcal{K}^{(1)\top}&\ldots & \mathcal{K}^{(q)\top}\end{bmatrix}^\top.
$}
\subsection{Proposed Protocol for Choosing the Gains}
\label{subsec:gainchoice}

The main idea of our protocol is to have the agents choose $K_i$ one after another, using the approach discussed in Section~\ref{subsec:single}.
In the beginning, a leader is elected in a distributed way\footnote{Leader election could be implemented via max-consensus over the agents' identifiers~\cite{olfati2004consensus,muniraju2019analysis}.}. The leader initiates a token passing procedure where a token is circulated among the agents, and the token contains a matrix $F$ to be described next.
%
%
%

In general, when a node $v_j$ receives a message $F$ for the first time, it checks if $A+F$ is  Schur stable. If $A+F$ is not Schur stable, it computes a gain $K_j$ that stabilizes all the eigenvalues it can (as done in Section~\ref{subsec:single}), it adds its own term $B_jK_j$ to the message and it sends the updated message
$F+B_jK_j$ to an out-neighbor that has not yet been visited.
Notice that some of the agents might not be able to control any eigenvalue, and in that case, they simply choose $K_j=0_{n\times q_i}$.
If, conversely, upon receiving the message for the first time, agent $j$ realizes that $A+F$ is Schur stable, the message becomes read-only and starts being flooded in the network.
A last case is when a message that is not read-only is received by a node that has already received it; in this case, the message is simply forwarded to a not-yet-visited out-neighbor.
Notice that the proposed approach does not require a stopping criterion. In fact, since the system is jointly controllable by all agents, and due to the approach adopted for the selection of the gains, eventually, all unstable eigenvalues will be placed and one of the agents will thus receive $F$ such that $A+F$ is Schur stable; as a consequence, the read-only phase will begin. Also in this case, since the graph is strongly connected, and since each agent broadcasts the message to all its neighbors and then stops participating, the read-only message will reach all nodes in finite time.
Algorithm~\ref{alg:choiceK} summarizes the above procedure.
\begin{algorithm}
\caption{Choice of $K_i$ (from the perspective of agent $i$)}\label{alg:choiceK}
\begin{algorithmic}
\Procedure{Initialization}{$\,$}
\State Assign a priority to each out-neighbor;
\State Declare to be not yet visited;
\State $K_i\gets 0_{n\times q_i}$;
\EndProcedure
\Procedure{OnLeaderElected}{$\,$}
\State Create empty token $F=0$;
\State Trigger OnReceiveToken($F$);
\EndProcedure
\Procedure{OnReceiveToken}{$F$}

\eIf{already visited}{
	Forward $F$ to an out-neighbor according to the priorities;
 }{
	\eIf{$A+F$ is not Schur stable}{
		Compute $K_i$ that stabilizes controllable eigenvalues\textcolor{black}{\footnotemark};
	
  		$F\gets F+K_iB_i$;
	
 		Send $F$ to an out-neighbor according to the priorities;
	}{
		Declare $F$ read-only;
		
		Broadcast $F$ to all out-neighbors;
		
		Stop Participating;
	}
}
\EndProcedure

\Procedure{OnReceiveReadOnlyMessage}{$F$}
		\State Broadcast $F$ to all out-neighbors;
		
		\State Stop Participating;
\EndProcedure
\end{algorithmic}
\end{algorithm}
\footnotetext{\textcolor{black}{The matrix $K_i$ can be computed using the approach in Section~\ref{subsec:single}, i.e., via  Eqs.\eqref{eq:choicegainK}--\eqref{eq:choicegainK2}.}}

The following theorem establishes the correctness of the proposed algorithm and characterizes the associated time and message complexity.

\begin{theorem}
Let  $\mathcal{G}=\{\mathcal{V},\mathcal{E}\}$ be strongly connected, let Assumption~\ref{ass:1} hold true, and assume that the algebraic and geometric multiplicities of the unstable eigenvalues of $A$ are the same.
Then, Algorithm~\ref{alg:choiceK} is {\em correct}, i.e., each agent computes $K_i$ such that $A+\sum_{\ell=1}^N B_iK_i$ is Schur stable. Moreover, the time complexity and the message complexity of Algorithm~\ref{alg:choiceK} amount to $\mathcal{O}(N^2)$. Finally, at each step, $\mathcal{O}(N)$ messages are transmitted.
\end{theorem}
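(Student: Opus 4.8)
The plan is to split the statement into its \emph{correctness} part and its \emph{complexity} part, treating correctness as a sequential pole-placement argument layered on top of the token traversal. First I would record a compact reformulation of the accumulated feedback: since $\sum_{\ell} B_\ell K_\ell = B\,[K_1^\top,\ldots,K_N^\top]^\top$, the matrix produced by the algorithm is an ordinary state feedback $A+BK$ through the combined input $B$. By Assumption~\ref{ass:1} the pair $(A,B)$ is controllable, so a stabilizing stacked gain exists; more importantly, controllability is invariant under state feedback, so $(A+F,B)$ stays controllable after every token update $F\gets F+B_jK_j$. This guarantees that at every stage each still-unstable eigenvalue of $A+F$ is controllable through the combined input, hence, by the Popov--Belevitch--Hautus test, through $B_j$ for at least one agent $v_j$.

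I would then establish single-step correctness of each agent's action via Section~\ref{subsec:single}: when $v_j$ holds the token it applies the deflation of Eqs.~\eqref{eq:choicegainK}--\eqref{eq:choicegainK2} to every currently-unstable eigenvalue it can control, relocating each inside the unit disc while leaving the others fixed. The pivotal observation, on which the whole argument rests, is that a deflation performed through $B_j$ perturbs the left eigenvector of a mode $\nu$ only when $\nu$ is itself controllable through $B_j$: writing the update as the rank-one term $(c\,B_j)\,{\bm w}^\top$ and expanding in the biorthogonal eigenbasis of $A+F$, the coefficient of the correction to the left eigenvector of $\nu$ is proportional to ${\bm w}_\nu^\top B_j$, which vanishes precisely when $\nu$ is uncontrollable through $B_j$. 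Here the hypothesis that the unstable eigenvalues have equal algebraic and geometric multiplicity is essential, as it renders the unstable part of $A$ semisimple and thus supplies the well-defined biorthogonal eigenvectors this expansion requires. Combining these facts yields correctness: for any unstable eigenvalue $\lambda$ with left eigenvector ${\bm w}_\lambda$, every agent acting before the first agent able to control $\lambda$ was, by definition, unable to control it, and hence left ${\bm w}_\lambda$ unchanged; so when the first capable agent $v_j$ receives the token, $\lambda$ is still present with its original, unperturbed eigenvector, ${\bm w}_\lambda^\top B_j\neq 0$ holds, and $v_j$ places $\lambda$. Since $\mathcal{G}$ is strongly connected the traversal visits every node, so each unstable eigenvalue meets a capable agent and is relocated; once all are moved, some node detects that $A+F$ is Schur stable, marks $F$ read-only and floods it, after which every node holds a well-defined gain (computed on its visit or the initialized $K_i=0_{n\times q_i}$) and $A+\sum_{\ell=1}^N B_\ell K_\ell$ is Schur stable.

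For the complexity claims I would bound the two phases separately. The token phase is a serialized depth-first traversal of a strongly connected digraph: between two successive first-visits the token makes at most $O(N)$ forwarding hops, and there are $N$ first-visits, so the traversal completes in $O(N^2)$ sequential steps, carrying a single message per step. The read-only phase is a flood from one node, which in a strongly connected graph reaches all nodes in $O(D)=O(N)$ rounds, each node broadcasting once to its out-neighbours, for a total of $O(|\mathcal{E}|)=O(N^2)$ messages and $O(N)$ messages per round. Adding the phases gives time and message complexity $O(N^2)$, with at most $O(N)$ messages transmitted at each step.

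The main obstacle is the pivotal eigenvector-invariance observation together with its consequence that a \emph{fixed}, graph-imposed ordering of the agents can never strand an unstable mode. Making this airtight requires the semisimplicity afforded by the multiplicity hypothesis, a careful treatment of repeated (but semisimple) and of complex-conjugate eigenvalues---the latter placed in conjugate pairs so that the accumulated gains remain real---and a rigorous argument that priority-based forwarding genuinely realizes a cover of a directed strongly connected graph. The remaining pole-placement bookkeeping and the flooding analysis are then routine.
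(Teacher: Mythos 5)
Your proposal is correct and follows essentially the same route as the paper: the identical sequential, deflation-based placement argument for correctness (the paper compresses your left-eigenvector-invariance observation into the assertion that each $K_i$ modifies ``only the eigenspace spanned by the eigenvalues that are unstable and controllable by the $i$-th agent alone,'' so $A+B\mathcal{K}$ is Schur stable ``by construction''), and the identical two-phase complexity analysis (a worst-case token walk back through the leader costing $2i-3$ hops per first visit, summing to $\mathcal{O}(N^2)$ time and messages, followed by an $\mathcal{O}(N)$-round flood with $\mathcal{O}(|\mathcal{E}|)$ messages and $\mathcal{O}(N)$ messages per step). Your explicit treatment of why agents acting earlier cannot perturb the left eigenvector of a mode they cannot control, and of where the semisimplicity hypothesis on the unstable eigenvalues enters, rigorously fills in details the paper leaves implicit rather than departing from its method.
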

\begin{proof}
In order to prove our statement, let us first show that, over a strongly connected graph $G$, the message will be received by every agent (hence each agent will add its contribution $B_iK_i$) in at most $\mathcal{O}(N^2)$ steps.

To this end, let us consider the worst case in terms of graph topology and priorities and let us use $v_i$ to denote the agents with the order they are first reached by the token (hence, the leader is $v_1$).
Moreover let $\omega_i$ denote the number of steps required to visit node $v_{i}$ from the time the token is first received by node  $v_{i-1}$ (in this view, $\omega_1=0$).
\textcolor{black}{Over a strongly connected graph, in the worst case, the token always goes back to $v_1$ after visiting a new node. Therefore, node $v_i$ is visited after $\omega_i=2i-3$ steps; in fact, it takes $i-2$ steps to go back to $v_1$ and additional $i-1$ steps to reach $v_i$ from $v_1$.}
Hence, the total message complexity is \mbox{$\sum_{i=2}^N \omega_i = \sum_{i=2}^N\left(2 i-3\right) = N^2\!-\!4N\!+\!2\!=\!\mathcal{O}(N^2).$}

At this point, we observe that since by Assumption~\ref{ass:1} the pair $(A,B)$ is controllable, and since the gain matrices $K_i$ are chosen iteratively, each time modifying only the eigenspace spanned by the eigenvalues that are unstable and controllable by the $i$-th agent alone, by construction $A+B\mathcal{K}$ is Schur stable, with \mbox{$\mathcal{K}=\begin{bmatrix}K_1^\top,\ldots,K_N^\top\end{bmatrix}^\top.$}
Correctness follows by noting that, when the last agent receives the message for the last time and updates $F$, it holds
\mbox{$A+B\mathcal{K}=A+\sum_{\ell=1}^N B_i K_i = A+F.$}
Notice that, since the token passing requires $\mathcal{O}(N^2)$ steps in the worst case, the message is declared as read-only in $\mathcal{O}(N^2)$ steps, and the read-only message is received by all agents in additional $\mathcal{O}(N)$ steps (\ie because the message is flooded over the strongly connected graph). As a consequence, the time complexity of Algorithm~\ref{alg:choiceK} is $\mathcal{O}(N^2)$.
Let us now discuss the message complexity. In particular, during the initial phase in which the nodes transmit the token on a point-to-point basis, $\mathcal{O}(N^2)$ messages are transmitted. Then, when the token becomes read-only, the message is transmitted once for each link, \ie $\mathcal{O}(|E|)$ times.
Therefore, overall, the message complexity of Algorithm~\ref{alg:choiceK} is \mbox{$\mathcal{O}(N^2+|E|)=O(N^2).$}
To conclude we observe that, in the worst case of a star topology where the star-center broadcasts the read-only message, $\mathcal{O}(N)$ messages are transmitted at each time step.
\end{proof}
\begin{remark}
As a byproduct of Algorithm~\ref{alg:choiceK}, each node will know $\sum_{\ell=1}^N B_\ell K_\ell$, which can be used within Eq.~\eqref{eq:pippo}, thus relaxing the assumption in \cite{savas2022separation}, in which each agent $v_i$ is required to know $\sum_{j\in \mathcal{N}^{\texttt{in}}_i} B_jK_j$ \emph{a priori}.
It is worth mentioning that the input matrices $B_i$ are never directly shared; instead, the nodes accumulate $B_iK_i$ in the token.
\end{remark}
\begin{remark}
Algorithm~\ref{alg:choiceK} can be applied also to the selection of the gains $L_i$. In fact, it is sufficient to apply the algorithm to the selection of gain matrices $\widetilde{K}_i$ with respect to the pairs $(A^\top, -\frac{1}{n}C_i^\top)$, and then set $L_i=\widetilde{K}_i^\top$.
\end{remark}

\section{Illustrative Examples}\label{sec:example}

Consider a directed network consisting of four nodes, described by matrix $P$, where each node has access to the dynamical matrix $A$. 
$$\footnotesize \setlength\arraycolsep{2.2pt}
    P \!=\!\!\! \begin{bmatrix}
        1/3 & 0 & 1/4 & 1/3\\
        1/3 & 1/2 & 1/4 & 0\\
        0 & 1/2 & 1/4 & 1/3\\
        1/3 & 0 & 1/4 & 1/3
    \end{bmatrix}\!\!,
    A \! = \!\! \begin{bmatrix}
        1&0.5&0&0&3&0&0&0\\
        0.5&1&0&0&0&0&0&0\\
        0&0&1&0.5&0&0&0&0\\
        0&0&0.8&1&0&0&0&0\\
        0&0&0&0&1&0.5&0&0\\
        0&0&0&0&0.6&1&0&0\\
        0&0&0&0&0&0&0.7&0.1\\
        1&0&0&0&0&0&0.2&0.7
    \end{bmatrix}\!\!.\!\!    
$$
Moreover, each node in the network has sensing, and computational capabilities to run the finite-time exact average consensus algorithm (for this setup $\overline{m}=11$ steps) and compute its local state estimates, and control input signals. More specifically, each node can sense only one state, and apply only one control signal to the system. The local input matrices $B_i$ for each node $v_i$ are $B_1=\setlength\arraycolsep{2pt}
\begin{bmatrix}
    0 & 1 & 0 & 0 & 0 & 0 & 0 & 0
\end{bmatrix}^{\top}, B_2=\begin{bmatrix}
    0 & 0 & 0 & 1 & 0 & 0 & 0 & 0
\end{bmatrix}^{\top}, B_3=\begin{bmatrix}
    0 & 0 & 0 & 0 & 0 & 1 & 0 & 0
\end{bmatrix}^{\top}, \text{ and } B_4=\begin{bmatrix}
    0 & 0 & 0 & 0 & 0 & 0 & 0 & 1
\end{bmatrix}^{\top}
$. Similarly, the local measurement matrices $C_i$ are $
C_1=\setlength\arraycolsep{2pt}\begin{bmatrix}
1&0&0&0&0&0&0&0\end{bmatrix}, C_2=\begin{bmatrix}0&0&1&0&0&0&0&0\end{bmatrix}, C_3=\begin{bmatrix}0&0&0&0&1&0&0&0\end{bmatrix}, \text{ and } C_4=\begin{bmatrix}0&0&0&0&0&0&1&0\end{bmatrix}$.



Let $\rho_i$ and $\chi_i$ denote, respectively, the controllability and observability indices for the $i$-th agent, \ie the rank of the controllability and observability matrices for the pairs $(A,B_i)$ and $(A,C_i)$, respectively. 
We have that
$\rho_1=4$, $\rho_2=2$, $\rho_3=6$, $\rho_4=2$, while $\chi_1=4$, $\chi_2=2$, $\chi_3=2$, $\chi_4=6$, \ie the system is not independently observable or controllable by the single agents. However, the system is jointly observable and controllable as the rank of both the overall controllability and observability matrices is full.

Let us now consider the initialization phase, and let us assume that $v_1$ is elected as leader and that the token visits the agents in order from $v_1$ to $v_4$. The agents choose $L_i$ and $K_i$ via Algorithm~\ref{alg:choiceK}, with the aim to place the observer eigenvalues at \mbox{$\{0.20, 0.21, \ldots, 0.27\}$} and the controller eigenvalues at \mbox{$\{0.60, 0.61, \ldots, 0.67\}$}. The local observer and control gains chosen by each agent $v_i$ correspond to the $i$-th row of $K$ and column of $L$, respectively, where
$$
\footnotesize{
K\!=\!\!
\setlength\arraycolsep{1.8pt}
    \begin{bmatrix}
   \shortminus1.0916 &   0&    0.0003      &   0\\
   \shortminus1.0114 &   0&   0.0003   &      0\\
   0&  \shortminus1.0664  & 0&         0\\
   0&  \shortminus0.7300 &0&         0\\
   31.9630&  0&  \shortminus0.8021    &     0\\
   32.3204 & 0&   \shortminus0.5786      &   0\\
   \shortminus0.0350 &   0 &    0&         0\\
   \shortminus0.0247 &   0&    0&         0
\end{bmatrix}^\top\!\!\!\!\!\!, 
L\!=\!\!
\begin{bmatrix}
   12.5600&    0&         0&    0\\
   \shortminus8.1221&   0&         0&   0\\
    0&    6.0400&         0&   0\\
    0&    7.7600&         0&   0\\
    7.0165&    0&         0&    0\\
    7.6708&    0&         0&    0\\
   \shortminus0.1299&    0&         0&    3.4800\\
   \shortminus5.7761&   0&         0&    8.3680\\
\end{bmatrix}\!.}
$$
Here it is worth mentioning that, although all agents participate in all procedures involved in estimating and stabilizing the states of the system, the control gain of $v_4$ and the observer gain of $v_3$ are zero. 
In the first case, this occurs since when agent $v_4$ receives the token the matrix already has the desired eigenvalues, while in the second case, this is due to the fact that agent $v_3$ is only able to change some of the eigenvalues that have already been changed, hence there is a need to resort to the observer gain chosen by $v_4$.


\begin{figure}[h]
    \centering
    \begin{subfigure}[t]{.99\columnwidth}
        \centering
        \definecolor{qual1}{HTML}{7FC97F} 
\definecolor{qual2}{HTML}{BEAED4} 
\definecolor{qual3}{HTML}{FD673A} 
\definecolor{qual4}{HTML}{386CB0} 
\definecolor{qual5}{HTML}{E31A1C} 
\newcommand\normX[1]{\left\lVert#1\right\rVert}
\begin{tikzpicture}
\begin{axis}[at={(0cm,0cm)},
            xmode=normal, ymode=log,
			at={(0cm,0cm)},
			grid=both,
			tick label style={font=\small},
    		label style={font=\small},
			width = 8.4cm, height=3.9cm,
			xlabel={normalized time $t$, $\tau=0.1$},
			ylabel={error $\lvert|\cdot\rvert|$},
			ymax=10000,
			xmax=300,
			ymin=0.0000000001,
			xtick={0,100,200,300},
			ytick={0.0000000001,0.00000001,0.000001, 0.0001, 0.01, 1, 100,10000},
			enlarge x limits=false,
            every axis plot/.append style={semithick},
            legend image post style={scale=0.55},
            legend pos=north east,
            smooth,
            legend style={font=\scriptsize, column sep=1ex,legend columns=1,mark repeat=1,mark phase=3},
            legend entries={Proposed algorithm, \cite{savas2022separation} with $m=6$, \cite{savas2022separation} with $m=11$, \cite{savas2022separation} with $m=22$}
 ]
\addplot [red,no markers] table [x expr=\thisrow{k1}*(11*0.1+1),y=nostro, col sep=comma]{xhatNormlastfigure.csv};
\addplot [cyan,mark=+] table [x expr=\thisrow{k1}*(20*0.1+1),y=l20, col sep=comma]{xhatNormlastfigure.csv};
\addplot [blue,densely dotted,no markers] table [x expr=\thisrow{k1}*(30*0.1+1),y=l30, col sep=comma]{xhatNormlastfigure.csv};
\addplot [magenta,mark=Mercedes star] table [x expr=\thisrow{k1}*(40*0.1+1),y=l40, col sep=comma]{xhatNormlastfigure.csv};
\end{axis}
\end{tikzpicture}
    \end{subfigure}
    \vspace{-5pt}
    \begin{subfigure}[t]{.99\columnwidth}
        \centering
        \definecolor{qual1}{HTML}{7FC97F} 
\definecolor{qual2}{HTML}{BEAED4} 
\definecolor{qual3}{HTML}{FD673A} 
\definecolor{qual4}{HTML}{386CB0} 
\definecolor{qual5}{HTML}{E31A1C} 
\newcommand\normX[1]{\left\lVert#1\right\rVert}
\begin{tikzpicture}
\begin{axis}[at={(0cm,0cm)},
            xmode=normal, ymode=log,
			at={(0cm,0cm)},
			grid=both,
			tick label style={font=\small},
    		label style={font=\small},
			width = 8.4cm, height=3.9cm,
			xlabel={normalized time $t$, $\tau=1$},
			ylabel={error $\lvert|\cdot\rvert|$},
			ymax=10000,
			xmax=1000,
			ymin=0.0000000001,
			ytick={0.0000000001,0.00000001,0.000001, 0.0001, 0.01, 1, 100,10000},
			enlarge x limits=false,
            every axis plot/.append style={semithick},
            legend image post style={scale=0.55},
            legend pos=south east,
            smooth,
            legend style={font=\scriptsize, column sep=1ex,legend columns=1,mark repeat=1,mark phase=3},
            legend entries={Proposed algorithm, \cite{savas2022separation} with $m=6$, \cite{savas2022separation} with $m=11$, \cite{savas2022separation} with $m=22$}
 ]
\addplot [red,no markers] table [x expr=\thisrow{k1}*(11*1+1),y=nostro, col sep=comma]{xhatNormlastfigure.csv};
\addplot [cyan,mark=+] table [x expr=\thisrow{k1}*(20*1+1),y=l20, col sep=comma]{xhatNormlastfigure.csv};
\addplot [blue,densely dotted,no markers] table [x expr=\thisrow{k1}*(30*1+1),y=l30, col sep=comma]{xhatNormlastfigure.csv};
\addplot [magenta,mark=Mercedes star] table [x expr=\thisrow{k1}*(40*1+1),y=l40, col sep=comma]{xhatNormlastfigure.csv};
\end{axis}
\end{tikzpicture}
    \end{subfigure}
    \vspace{-5pt}
    \begin{subfigure}[t]{.99\columnwidth}
        \centering
        \definecolor{qual1}{HTML}{7FC97F} 
\definecolor{qual2}{HTML}{BEAED4} 
\definecolor{qual3}{HTML}{FD673A} 
\definecolor{qual4}{HTML}{386CB0} 
\definecolor{qual5}{HTML}{E31A1C} 
\newcommand\normX[1]{\left\lVert#1\right\rVert}
\begin{tikzpicture}
\begin{axis}[at={(0cm,0cm)},
            xmode=normal, ymode=log,
			at={(0cm,0cm)},
			grid=both,
			tick label style={font=\small},
    		label style={font=\small},
			width = 8.4cm, height=3.9cm,
			xlabel={normalized time $t$, $\tau=10$},
			ylabel={error $\lvert|\cdot\rvert|$},
			ymax=10000,
			xmax=10000,
			ymin=0.0000000001,
			xtick={0,2500,5000,7500,10000},
			ytick={0.0000000001,0.00000001,0.000001, 0.0001, 0.01, 1, 100,10000},
            xticklabel style={
                    /pgf/number format/fixed,
                    /pgf/number format/precision=1
            },
            scaled x ticks=false,
			enlarge x limits=false,
            every axis plot/.append style={semithick},
            legend image post style={scale=0.55},
            legend pos=south east,
            smooth,
            legend style={font=\scriptsize, column sep=1ex,legend columns=1,mark repeat=1,mark phase=3},
            legend entries={Proposed algorithm, \cite{savas2022separation} with $m=6$, \cite{savas2022separation} with $m=11$, \cite{savas2022separation} with $m=22$}
 ]
\addplot [red,no markers] table [x expr=\thisrow{k1}*(11*10+1),y=nostro, col sep=comma]{xhatNormlastfigure.csv};
\addplot [cyan,mark=+] table [x expr=\thisrow{k1}*(20*10+1),y=l20, col sep=comma]{xhatNormlastfigure.csv};
\addplot [blue,densely dotted,no markers] table [x expr=\thisrow{k1}*(30*10+1),y=l30, col sep=comma]{xhatNormlastfigure.csv};
\addplot [magenta,mark=Mercedes star] table [x expr=\thisrow{k1}*(40*10+1),y=l40, col sep=comma]{xhatNormlastfigure.csv};
\end{axis}
\end{tikzpicture}
    \end{subfigure}  
    \caption{\textcolor{black}{Comparison of the estimation error between the proposed approach (solid line, $\overline{m}=11$) and the one in \cite{savas2022separation} when $m\in\{6,11,22\}$, for different communication duration $\tau \in \{0.1, 1, 10\}$.}}
    \label{fig:simulation_results}
\end{figure}

To conclude, in Fig.~\ref{fig:simulation_results} we provide a comparison\footnote{
Notice that also~\cite{wang2019distributed} relies on rounds of distributed averaging; however, such an approach only focuses on estimation, while no control action is considered. For this reason, and since in \cite{wang2019distributed} the estimation gains are designed in a way that would not be suitable for our approach (and vice versa), we opted not to compare them.} with the approach in Savas \emph{et al.}~\cite{savas2022separation}, which is based on $m$ repeated rounds of asymptotic consensus in-between state updates.
Following the approach in~\cite{scaman2017optimal}, by normalization of the time axis, we assume that the local computations executed at each estimation-control iteration $k$ are performed in one unit of time, denoted by $t$.
Moreover, we assume that each communication round within the asymptotic or time-varying consensus procedures requires a time equal to $\tau$ (which may be smaller, equal, or greater than 1). In particular, the upper plot of Fig.~\ref{fig:simulation_results} considers the case where $\tau=0.1$ (\ie communication is faster than local computation), the middle plot of Fig.~\ref{fig:simulation_results} considers the case where $\tau=1$ (\ie communication and computation are equally fast), while the case $\tau=10$ (\ie communication is slower than computation) is reported in the lower plot of Fig.~\ref{fig:simulation_results}.
In more detail, the figure considers the problem discussed in the previous example and compare the norm of the average error of our approach with the one in \cite{savas2022separation} relying, for fairness of comparison, on the same control and estimation gains chosen via our distributed initialization algorithm. The gains $W_{ij}$ were computed\footnote{In \cite{savas2022separation}, the gains $W_{ij}$ amount to the sum of two terms $W_{ij}^E=p_{ij}A$ and $W_{ij}^C$. In particular, the term $W_{ij}^C$ is computed by solving a centralized optimization problem featuring a positive definiteness constraint. To compute the terms $W_{ij}^E$, we resorted to an approximated solver, namely MIDACO~\cite{Schlueter_2013}. The interested reader is referred to~\cite{savas2022separation} for a complete discussion on how to compute the gains $W_{ij}$.} as in \cite{savas2022separation}.
Interestingly, while the proposed approach requires $\overline{m}=11$ communication rounds for each state update, the approach in \cite{savas2022separation}, for the considered instance, exhibits a converging behavior only for $m\geq 6$ asymptotic average consensus rounds. In this view, the figure reports the evolution of the error for the approach in \cite{savas2022separation} considering $m\in\{6,11,22\}$.
According to the figure, in all three considered scenarios for the computation/communication time, the proposed algorithm consistently outperforms the one in \cite{savas2022separation}. 
The comparison is particularly favorable to our algorithm when the communication time is comparable or larger than the local computation time, \ie $\tau \in \{1,10\}$.

Interestingly, the normalized time $t$ required for the case where $m=22$ is larger than the one required for $m=11$; in fact, we observe that for these two cases, the performance is comparable in terms of estimation-control iterations $k$, but setting $m=22$ requires twice as much communication rounds as for $m=11$. \textcolor{black}{Hence, in the case of asymptotic average-consensus rounds, increasing $m$ may not correspond to a performance increase. }

\section{Conclusions and Future Directions}\label{sec:conclusions}
\color{black}
This paper introduces a novel methodology for distributed LTI system estimation and control in discrete-time. It incorporates a finite-time average consensus mechanism between estimation steps, allowing for arbitrary eigenvalue placement in estimation error dynamics and overcoming the limitations of previous approaches. A novel, fully distributed method for computing control gains enhances the system's self-configuration and resilience. Simulation results demonstrate improved performance over other methods that are based on asymptotic consensus rounds.

Future research could explore: dealing with consensus mechanisms under unmet assumptions (e.g., when Assumption~\ref{assum:numberofsteps} is not satisfied), addressing communication challenges like delays, packet losses, and varying topologies, developing procedures for simultaneous gain design across nodes, and extending the framework to stochastic linear systems.
\color{black}

\bibliographystyle{IEEEtran}
\bibliography{references.bib}

\end{document}